\theoremstyle{definition}
\newtheorem{theorem}{Theorem}
\newtheorem{lemma}{Lemma}
\newtheorem{definition}{Definition}
\def\ket#1{|#1\rangle}
\definecolor{myblue}{cmyk}  {1, 0, 0, 0}
\definecolor{mygreen3}{cmyk}{1, 0, 1, 0}
\definecolor{mygreen1}{cmyk}{.5,0, 1, 0}
\definecolor{myred}{cmyk}   {0, 1, 1, 0}
\definecolor{myyellow}{cmyk}{0, 0, 1, 0}
\newcommand*\cbb{\cellcolor{myblue!40}}   
\newcommand*\cgb{\cellcolor{mygreen3!40}} 
\newcommand*\cgr{\cellcolor{mygreen1!40}} 
\newcommand*\crr{\cellcolor{myred!70}}    
\newcommand\myH{\stackrel{\mathclap{\tiny\mbox{\text{H}}}}{\to}}
\newcommand\myC{\stackrel{\mathclap{\tiny\mbox{\text{CNOT}}}}{\to}}
\begin{document}

\preprint{APS/123-QED}

\title{On the quantum separability of qubit registers}

\author{Szymon Łukaszyk}
\email{szymon@patent.pl}
\affiliation{Łukaszyk Patent Attorneys, Głowackiego 8, 40-052 Katowice, Poland}


\begin{abstract}
We show that the bipartite separability of a pure qubit state hinges critically on the combinatorial structure of its computational‑basis support. Using Boolean cube geometry, we introduce a taxonomy that distinguishes support‑guaranteed separability from cases in which entanglement depends on probability amplitudes. We provide closed‑form support counts, identify forbidden configurations that enforce multipartite entanglement, and show how these results can enable fast entanglement diagnostics in quantum circuits. The framework offers immediate utility in classical simulation, entanglement‑aware circuit design, and quantum error‑correcting code analysis. This establishes support geometry as a practical and scalable tool for understanding entanglement in quantum information processing.
\end{abstract}

\keywords{
quantum entanglement,
quantum separability,
Boolean‑cube geometry,
quantum information processing,
combinatorial quantum information
}

\maketitle

\section{Introduction}\label{sec:Introduction}

Quantum entanglement is typically introduced through \textit{physical localization} of the quantum states in terms of quantum systems \textit{being in} or \textit{having} those states.
However, this phenomenon, which is, indeed, fundamentally nonlocal \cite{bell_einstein_1964, aspect_experimental_1982}, is not even required to demonstrate nonlocality \cite{wang_violation_2025}. Fundamental quantum mechanics has been constructed directly outside classical physics and even outside the general classical thinking \cite{mugur-schachter_crucial_2008}, and the mathematical concepts of quantum state entanglement and separability can be examined separately from the system's physical attributes.
This bypasses a cybernetic problem \cite{gershenson_self-organizing_2025} of defining a system and its boundaries, which is the root of various quantum \textit{paradoxes}, of which Schrödinger's cat is perhaps the most prominent example, as it requires a box in which a cat is \textit{localized}.

In this study, we examine the bipartite separability conditions of pure qubit quantum registers as rays in Hilbert spaces devoid of any \textit{spatial boundaries}.
Although a quantum state does not need to be described by qubits, any $m$-level quantum state (pure or mixed) for $m>2$ can be represented as a state on $\lceil \log_2(m) \rceil$ qubits via a substitution.
\begin{figure}[htbp]
\centering
\includegraphics[width=\columnwidth]{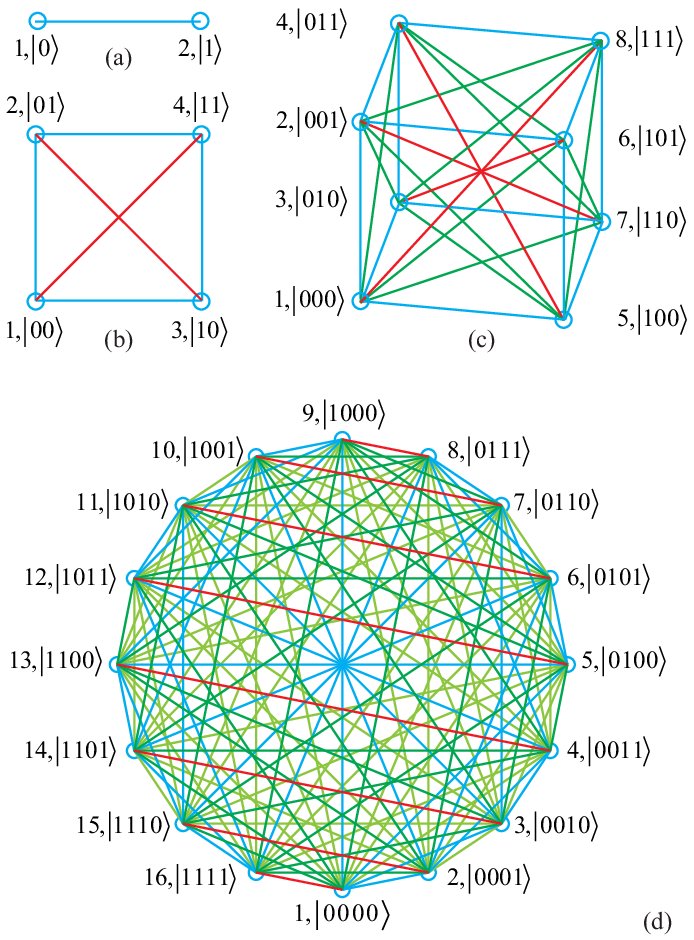}
\caption{\label{Fig:support}
Maximum supports of one (a) to four (d) qubit registers in the computational basis.
Vertices and blue edges represent any-partition-separable states, green edges --- partition-dependent separable states, and red edges --- states inseparable.
}
\end{figure}

We consider the states separable only across certain bipartitions, as well as those separable only for certain values of their probability amplitudes, and we provide their distributions with respect to their support sizes.
The results can be applied in quantum computational applications.

\section{Methods}\label{sec:Methods}

$\left\{0, 1\right\}^n$ Boolean space, $n \in \mathbb{N}$, can be considered as a complete graph constructed upon $n$-cube, where a distinct index $j=1,2,\cdots,2^n$ and a distinct address $a(j) = \{0,1\}^n$ can be assigned to each vertex \cite{SLcubes}.

Consider a pure quantum register containing $n$ qubits in the computational basis
\begin{equation}\label{eq:state}
\ket{A} = \sum_{j=1}^{2^n} \alpha_j \ket{a(j)}, \quad \sum_{j=1}^{2^n} |\alpha_j|^2 = 1,
\end{equation}
where
$\alpha_j \in \mathbb{C}$ are the probability amplitudes (PA, we sometimes write them as $\alpha_{a(j)}$),
$\ket{a(j)}$ are the basis components (kets in Dirac notation), and
$a(j)$ are their addresses.
Though the standard computational basis, with kets corresponding to the $j^{\text{th}}$ vertices of $n$-cube \cite{duff_black_2013}, applies naturally to quantum circuits, quantum error correction codes, etc., it can also be considered a physical aspect of nature \cite{lukaszyk_black_2025}. 

The \textit{support} of the state \eqref{eq:state} $\text{supp}(\ket{A})$ is the set of basis kets $\ket{a(j)}$ for which $\alpha_j \ne 0$, and the \textit{support size} ($k=|\text{supp}(\ket{A})|$) is the cardinality of this set.
The maximum support sizes for one to four qubits are shown in Fig.~\ref{Fig:support}.
There are $2^{2^n}-1$ distinct supports for $n$ qubits starting from $2^n$ one-ket states (single vertices of an $n$-cube) and ending on a state containing all $2^n$ kets (all vertices of the $n$-cube).

A pure quantum state $\ket{A}$ is called \textit{separable} if and only if the state~\eqref{eq:state} can be written as a tensor product
\begin{equation}\label{eq:separable_state}
\ket{A} = \ket{B} \otimes \ket{C},
\end{equation}
of at least two quantum states, which can be represented in the computational basis as similar qubit registers 
\begin{equation}
\begin{split}
\ket{B} &= \beta_1 \ket{0_1 0_2 \dots 0_{l-1} 0_l}       + \beta_2 \ket{0_1 0_2 \dots 0_{l-1} 1_l} + \dots \\
        &+ \beta_{2^l-1} \ket{1_1 1_2 \dots 1_{l-1} 0_l} + \beta_{2^l} \ket{1_1 1_2 \dots 1_{l-1} 1_l},\\
\ket{C} &= \gamma_1\ket{0_1 0_2 \dots 0_{m-1} 0_m}       + \gamma_2\ket{0_1 0_2 \dots 0_{m-1} 1_m} + \dots \\
        &+ \gamma_{2^m-1}\ket{1_1 1_2 \dots 1_{m-1} 0_m} + \gamma_{2^m}\ket{1_1 1_2 \dots 1_{m-1} 1_m},\\
\end{split}    
\end{equation}
consisting respectively of $l>0$ and $m>0$ qubits, where $n=l+m$ and $\sum_{j=1}^{2^l} |\beta_j|^2 = \sum_{j=1}^{2^m} |\gamma_j|^2 = 1$.

Otherwise, $\ket{A}$ is an \textit{entangled} state, and the degree of its entanglement can be measured using various methods. 
One of them employs the entanglement (von Neumann) entropy (here in bits)
\begin{equation} \label{eq:entanglement_entropy}
\mathcal{S}_A = - \sum_j \lambda_j \log_2 (\lambda_j), \quad 0 \le \mathcal{S}_A \le \min(l,m),
\end{equation}
where $\lambda_j \in \mathbb{R}_{\ge 0}$, $\sum_j \lambda_j=1$ are the eigenvalues of the reduced density matrix
\begin{equation}\label{eq:reduced_density_matrix}
\rho_{B} = \text{Tr}_C(\rho_{A}) \quad \text{or} \quad \rho_{C} = \text{Tr}_B(\rho_{A}).
\end{equation}
obtained by tracing out a specified set of respectively $m$ or $l$ qubits from the density matrix $\rho_{A}$ of the state $\ket{A}$.
If the state $\ket{A}$ is separable $\mathcal{S}_A = 0$.
Otherwise ($\mathcal{S}_A > 0$), the state is inseparable and for $\mathcal{S}_A = \min(l,m)$ it is maximally entangled. 
The eigenvalues of the density matrix \eqref{eq:reduced_density_matrix} can also be used to express the state \eqref{eq:state} using the Schmidt decomposition as
\begin{equation}\label{eq:Schmidt_decomposition}
\ket{A} = \sum_{j} \sqrt{\lambda_j} \ket{b_j} \otimes \ket{c_j},    
\end{equation}
where $\ket{b_j}$ and $\ket{c_j}$ are orthonormal bases of states $\ket{B}$ and $\ket{C}$.
Hence, the state \eqref{eq:Schmidt_decomposition} is separable if $\lambda_j=1$ for some $j$ and then $\ket{b_j}=\ket{B}$, $\ket{c_j}=\ket{C}$.
The entanglement between the states $\ket{B}$ and $\ket{C}$ is invariant under a unitary operation $U_{B} \otimes U_{C}$ acting on the individual states $\ket{B}$ and $\ket{C}$; the entanglement entropy \eqref{eq:entanglement_entropy} remains constant after such an operation.

States that can be written as tensor products of $s$ other pure states
($\ket{A} = \ket{B_1} \otimes \ket{B_2} \otimes \dots \otimes \ket{B_s}$)
are called \textit{$s$-separable} \cite{PhysRevA.61.042314}.
However, in this study, we focus on the bipartite separability condition \eqref{eq:separable_state} of pure $n$-qubit quantum registers $\ket{A_{n,k}}$, across at least one bipartition, and classify states according to the number and type of bipartitions across which they are separable, considering either equal or arbitrary, normalized PAs.

We explicitly exclude mixed states from the analysis, as they lack unique computational-basis support and define entanglement via convexity rather than support geometry, thereby requiring methods designed for mixed states, such as the Peres-Horodecki criterion or entanglement witnesses.
Furthermore, mixed states depend on the notion of classical probability

\section{Results}\label{sec:Results}

\begin{table*}[htbp]
\caption{\label{Table:Supports} 
The number of APS (blue) and PDS (three bipartitions - light green, one bipartition - dark green, no bipartitions - red) states of a quantum register containing $1 \le n \le 4$ qubits with arbitrary PAs with respect to the support size of the state (see text for details).
}
\begin{tabular}{|c|c|c|c|c|c|c|c|c|c|c|c|c|c|c|c|c|c|c|c|c|c|c|c|c|c|c|c|c|}
\hline
   & \multicolumn{27}{|c|}{Support Size of a Quantum State ($k$)} & \\
$n$&\cbb{1}&\cbb{2}&\cgr{2}&\cgb{2}&\crr{2}&\cgr{3}&\cgb{3}&\crr{3}&\cgr{4}&\cgb{4}&\crr{4}&\cgb{5}&\crr{5}&\cgb{6}&\crr{6}&\cgb{7}&\crr{7}&\cgb{8}&\crr{8}&\crr{9}&\crr{10}&\crr{11}&\crr{12}&\crr{13}&\crr{14}&\crr{15}&\crr{16}&$\Sigma$\\
\hline
1 &	2  & 1  &    &	  &	  &	   &     &     &    &     &      &     &      &     &      &    &       &   &       &       &      &      &      &     &     &    &   & 3 \\
2 &	4  & 4  &    &    & 2 &	   &     & 4   &    &     & 1    &     &      &     &      &    &       &   &       &       &      &      &      &     &     &    &   & 15 \\
3 &	8  & 12 &    & 12 & 4 &    & 24  & 32  &    & 6   & 64   &	   & 56   &     & 28   &    & 8     &   & 1     &       &      &      &      &     &     &    &   & 255 \\
4 &	16 & 32 & 48 & 32 &	8 & 96 & 256 & 208 & 24 & 512 &	1284 & 448 & 3920 & 224 & 7784 & 64 & 11376 & 8 & 12862 & 11440 & 8008 & 4368 & 1820 & 560 & 120 & 16 & 1 & 65535 \\
\hline
\end{tabular}
\end{table*}
\begin{table}[htbp]
\caption{\label{Table:Supports_PAs} The number of CMB$_c$ supports given by formula \eqref{eq:NofCMBc_supports} and the number of PDS (no bipartitions - red, one bipartition - dark green, three bipartitions - light  green) and APS (blue) states of a quantum register containing $1 \le n \le 4$ qubits (see text for details).}
\begin{tabular}{|c||c|c|c|c||c|c|c|c|}
\hline
    & \multicolumn{4}{|c||}{$\sum_{k=2}^{2^n} |\text{CMB}_c|(n,k)$} & \multicolumn{4}{|c|}{Arbitrary PAs} \\
$n$ & $c=0$ & $c=1$&$c=2$&$c=3$& \crr{PDS$_0$} & \cgb{PDS$_1$} & \cgr{PDS$_2$} & \cbb{APS} \\
\hline
1   & 1     & 0    & 0   & 0   &       &       &     & 3 \\
2   & 7     & 4    & 0   & 0   & 7     &       &     & 8 \\ 
3   & 193   & 42   & 12  & 0   & 193   & 42    &     & 20 \\
4   & 63775 & 1544 & 168 & 32  & 63775 & 1544  & 168 & 48 \\
\hline
\end{tabular}
\end{table}

\begin{lemma}\label{th:partitions}
The number of bipartitions a quantum register containing $n$ qubits can be separable across is $2^c-1$, where $0 \le c \le n-1$.
\end{lemma}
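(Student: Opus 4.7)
The plan is to reduce the problem to counting 2-colorings of the \emph{canonical block structure} of $\ket{A}$. I will first establish that every pure $n$-qubit state admits a unique finest tensor factorization
\[
\ket{A} = \ket{q_1} \otimes \ket{q_2} \otimes \cdots \otimes \ket{q_s},
\]
with the $\ket{q_k}$ pure on pairwise disjoint qubit blocks $Q_1,\ldots,Q_s$ partitioning the register, and each $\ket{q_k}$ inseparable across every bipartition of $Q_k$. Granting this, a global bipartition $(B\mid C)$ factors $\ket{A}$ as in \eqref{eq:separable_state} if and only if every block $Q_k$ lies entirely in $B$ or entirely in $C$. The number of such bipartitions is the number of unordered 2-colorings of the $s$-block set, namely $2^{s-1}-1$. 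Setting $c=s-1$ with $s\in\{1,\ldots,n\}$ gives exactly $2^{c}-1$ with $0\le c\le n-1$.

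The key intermediate result is a \emph{join property}: if $\ket{A}$ is separable across $(B\mid C)$ and across $(B'\mid C')$, then it factors as a product over the four intersections $B\cap B'$, $B\cap C'$, $C\cap B'$, $C\cap C'$. I would prove this by Schmidt-decomposing $\ket{B}$ across $(B\cap B'\mid B\cap C')$ and $\ket{C}$ across $(C\cap B'\mid C\cap C')$, substituting into $\ket{A}=\ket{B}\otimes\ket{C}$, and collecting terms to read off the Schmidt decomposition of $\ket{A}$ across $(B'\mid C')$ in the sense of \eqref{eq:Schmidt_decomposition}. The resulting Schmidt rank equals the product of the two inner Schmidt ranks; separability across $(B'\mid C')$ forces it to be $1$, so both inner states must be products, yielding the four-fold tensor factorization.

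Given the join property, the canonical finest decomposition is obtained by declaring $i\sim j$ whenever every bipartition that separates $\ket{A}$ keeps $i$ and $j$ on the same side. Reflexivity and symmetry are immediate, and transitivity follows because each separating bipartition induces a consistent 2-coloring of the qubits. Taking the equivalence classes as the candidate blocks $Q_k$ and iterating the join property over all separating bipartitions realizes $\ket{A}$ as a tensor product across the $Q_k$. Inseparability of each $\ket{q_k}$ inside its block is then forced: any internal separation would give a separating bipartition splitting two equivalent qubits, contradicting maximality of the equivalence relation.

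The main obstacle is proving the join property cleanly. Although multiplicativity of the Schmidt rank across nested factorizations is intuitive, one must verify that the reorganized double sum is a \emph{valid} Schmidt decomposition across $(B'\mid C')$ --- that the vectors $\ket{\phi_i}\otimes\ket{\chi_j}$ and $\ket{\psi_i}\otimes\ket{\omega_j}$ produced by nesting the two inner Schmidt decompositions are genuinely orthonormal on the two sides of the new bipartition. This follows from orthonormality of each inner Schmidt basis and the factorization of the inner product across tensor components, but it is the step where the argument must be executed carefully to avoid accidental rank collapse.
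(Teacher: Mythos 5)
Your proposal is correct, and it is substantially more rigorous than the paper's own proof. The paper argues informally: it counts the $(2^n-2)/2=2^{n-1}-1$ unordered bipartitions of $n$ qubits and then asserts that each ``entanglement across one bipartition'' merges qubits into a single effective unit, reducing the count to $2^{n-2}-1$, and so on down to $2^0-1=0$ for a fully entangled state. This presupposes exactly what you set out to prove: that the separating bipartitions of a pure state are precisely those respecting a well-defined block partition of the register. You supply the missing structure --- the join property (multiplicativity of the Schmidt rank under nested factorizations forces a state separable across two bipartitions to factor over all four intersections), the resulting canonical finest tensor factorization into $s$ mutually inseparable blocks, and the identification of the separating bipartitions with the $2^{s-1}-1$ unordered two-colorings of the blocks. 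Your treatment of the one technical point (orthonormality of the tensor products $\ket{\phi_i}\otimes\ket{\psi_j}$ of the inner Schmidt vectors, which guarantees the reorganized double sum is a genuine Schmidt decomposition with rank equal to the product of the inner ranks) is exactly right and is what makes the argument airtight. The paper's route is shorter and matches the intended reading of $c$ as ``number of effectively free units minus one,'' but yours is the one that actually establishes that the count must be of the form $2^c-1$ for every pure state; the only item neither proof addresses explicitly is that every value $0\le c\le n-1$ is attained, which is immediate from examples such as an entangled block tensored with $c$ single qubits.
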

\begin{proof}
The total number of subsets of a set containing $n$ qubits is $2^n$, including the empty set and this set itself.
We have to exclude these two inseparable cases and also consider the symmetry of partitioning.
Hence, the upper bound on the number of unique bipartitions for $c=n-1$ is given by $(2^n-2)/2=2^{n-1}-1$.
An entanglement across one bipartition reduces the cardinality of the set to $n-1$ qubits that can be similarly partitioned across $(2^{n-1}-2)/2=2^{n-2}-1$ bipartitions.
Finally, an entanglement across all bipartitions can be thought of as a set containing only one element and thus inseparable.
\end{proof}

For example, the register of four qubits can be separable at most across
$2^{4-1}-1=7$
bipartitions as
$\{1|234\}$, $\{2|134\}$, $\{3|124\}$, $\{4|123\}$, $\{12|34\}$, $\{13|24\}$, and $\{14|23\}$,
where "$|$" denotes the bipartition.
However, an entanglement across one bipartition, say $\cancel{1|4} \coloneqq \text{V}$, reduces the cardinality of this set to three 
$\{2|3\text{V}\}$, $\{3|2\text{V}\}$, and $\{\text{V}|23\}$.

In the following definition, we can give meaning to the free parameter $c$ we introduced in Lemma~\ref{th:partitions}.

\begin{definition}\label{def:CBS}
We call a support a common-bit (CMB$_c$) support, where $0 \le c \le n-1$ is the largest integer such that there exist $c$ coordinates in which all $k \ge 2$ kets in the support have the same bit value.
\end{definition}
Geometrically, a CMB$_c$ support is an $(n-c)$-dimensional coordinate face of the Boolean hypercube, which under the Segre embedding \cite{cirici_characterization_2021} corresponds to a coordinate-fixed face, where $c$ coordinates are constant.

States that are separable across all bipartitions for all PAs are called \textit{fully separable} \cite{dur_separability_1999}.
We redefine them in the context of bipartitions.

\begin{definition}\label{def:APS}
We call the state $\ket{A}$ any-partition-separable (APS) if it is a one ket state or a state with CMB$_{n-1}$ support.
\end{definition}
For example, the following state with CMB$_{3}$ support
\begin{equation}
\begin{split}
&\ket{A_{4,2}} = \alpha_1 \ket{0_10_20_30_4} + \alpha_2 \ket{0_10_20_31_4} =\\
&= \ket{0_1} \otimes \left(\alpha_1 \ket{0_20_30_4} + \alpha_2 \ket{0_20_31_4} \right) = \ket{B_1}    \otimes \ket{C_{234}} =\\
&= \ket{0_2} \otimes \left(\alpha_1 \ket{0_10_30_4} + \alpha_2 \ket{0_10_31_4} \right) = \ket{B_2}    \otimes \ket{C_{134}} =\\
&= \ket{0_3} \otimes \left(\alpha_1 \ket{0_10_20_4} + \alpha_2 \ket{0_10_21_4} \right) = \ket{B_3}    \otimes \ket{C_{124}} =\\
&= \left(\alpha_1 \ket{0_4} + \alpha_2 \ket{1_4} \right) \otimes \ket{0_10_20_3}       = \ket{B_4}    \otimes \ket{C_{123}} =\\
&= \ket{0_10_2} \otimes \left(\alpha_1 \ket{0_30_4} + \alpha_2 \ket{0_31_4} \right)    = \ket{B_{12}} \otimes \ket{C_{34}}  =\\
&= \ket{0_10_3} \otimes \left(\alpha_1 \ket{0_20_4} + \alpha_2 \ket{0_21_4} \right)    = \ket{B_{13}} \otimes \ket{C_{24}}  =\\
&= \ket{0_20_3} \otimes \left(\alpha_1 \ket{0_10_4} + \alpha_2 \ket{0_11_4} \right)    = \ket{B_{23}} \otimes \ket{C_{14}} \\
\end{split}
\end{equation}
is an APS state as it is separable across all seven bipartitions for all PAs $\alpha_1$, $\alpha_2$.
As an APS state admits a superposition of at most one of its qubits, only states spanned over the vertices and 1-edges of $n$-cube are APS states.
If two or more qubits vary across the support, then some bipartition encounters differing values on both sides, forcing amplitude-dependent constraints.
As $n$-cube has $\binom{n}{m}2^{n-m}$ $m$-faces, a quantum register has
\begin{equation}
|\text{APS}(n)| = \binom{n}{0}2^{n-0} + \binom{n}{1}2^{n-1} = 2^{n-1}(n+2)
\end{equation}
APS states (vertices and blue edges in Fig.~\ref{Fig:support}).

The APS states introduce another definition.

\begin{definition}\label{def:PDS}
We call a state $\ket{A}$ a partition-dependent separable (PDS$_c$) state if it has a CMB$_c$ support, where $0 \le c \le n-2$.
\end{definition}

A state with a CMB$_c$ support is separable across $2^c-1$ bipartitions for all PAs.
Therefore, a state with a CMB$_0$ support is inseparable for all PAs.

\begin{theorem}\label{th:states_by_support_size}
The number of CMB$_c$ supports of an $n$-qubit quantum register having the support size $k \ge 2$ is given by
\begin{equation}\label{eq:NofCMBc_supports}
|\text{CMB}_c|(n,k) = \binom{n}{c}2^c \sum_{m=0}^{n-c-1} (-1)^m \binom{n-c}{m} 2^m \binom{2^{n-c-m}}{k}.
\end{equation}
\end{theorem}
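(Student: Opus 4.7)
\medskip
\noindent\textbf{Proof proposal.} The plan is a two-stage counting argument: first fix which coordinates play the role of the ``common'' ones, and then count residual supports by inclusion--exclusion to enforce the \emph{maximality} of $c$ in Definition~\ref{def:CBS}.

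First I would pick the $c$ common-bit coordinates and their shared values. There are $\binom{n}{c}$ ways to choose which positions are the constant ones and, independently, $2^c$ ways to assign the fixed bit string on those positions. This produces the prefactor $\binom{n}{c}\,2^c$ in \eqref{eq:NofCMBc_supports}. After this choice, a CMB$_c$ support on the $n$-cube is in bijection with a $k$-subset $S$ of the sub-cube $\{0,1\}^{n-c}$ consisting of the remaining coordinates, subject to the constraint that \emph{none} of these $n-c$ coordinates is itself constant on $S$; otherwise the total number of common bits would be at least $c+1$, contradicting the maximality in Definition~\ref{def:CBS}.

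Next I would evaluate the number of such $k$-subsets of $\{0,1\}^{n-c}$ by inclusion--exclusion. For a set $T$ of $m$ coordinates in $\{0,1\}^{n-c}$, let $\mathcal{A}_T$ be the family of $k$-subsets on which every coordinate in $T$ is constant. Fixing the $m$ shared bit values in $T$ ($2^m$ choices) reduces the selection to choosing $k$ kets from the $2^{n-c-m}$ kets of the corresponding sub-cube, so $|\mathcal{A}_T|=2^m\binom{2^{n-c-m}}{k}$, which depends on $T$ only through $m=|T|$. The standard inclusion--exclusion formula then yields
\begin{equation*}
\sum_{m=0}^{n-c}(-1)^m\binom{n-c}{m}\,2^m\binom{2^{n-c-m}}{k}
\end{equation*}
$k$-subsets on which no coordinate of $\{0,1\}^{n-c}$ is constant. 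Multiplying by the prefactor $\binom{n}{c}2^c$ gives \eqref{eq:NofCMBc_supports}, up to the truncation of the summation at $m=n-c-1$; but the missing term $m=n-c$ contains the factor $\binom{2^0}{k}=\binom{1}{k}$, which vanishes for every $k\ge 2$, so the two expressions agree in the regime covered by the theorem.

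The one step that needs care is the translation from Definition~\ref{def:CBS} (``$c$ is the \emph{largest} such integer'') into the inclusion--exclusion: it is precisely what forces the ``no further constant coordinate'' condition and rules out overcounting by double-counting a support as CMB$_{c}$ and CMB$_{c'}$ for $c'>c$. The remaining checks --- that each $\mathcal{A}_T$ count is $2^m\binom{2^{n-c-m}}{k}$ regardless of which $m$ coordinates are in $T$, and that the choices of the $c$ common coordinates and their fixed bits are independent of the choice inside the residual sub-cube --- are straightforward once the bijection between CMB$_c$ supports and their restrictions to the non-common sub-cube is stated clearly.
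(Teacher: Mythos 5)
Your proposal is correct and follows essentially the same route as the paper's proof: the prefactor $\binom{n}{c}2^c$ from choosing the common positions and their values, followed by inclusion--exclusion over the coordinates of the residual $(n-c)$-subcube that are forced to be constant, with $|\mathcal{A}_T|=2^m\binom{2^{n-c-m}}{k}$. Your explicit observation that the omitted $m=n-c$ term vanishes because $\binom{1}{k}=0$ for $k\ge 2$ is a slightly cleaner justification of the truncation than the paper's appeal to the falling-factorial convention, but it is the same argument.
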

\begin{proof}
Consider a set containing all $2^n$ distinct binary strings of length $n$.
$1 \le k \le 2^n$ bitstrings from this set can serve as a support of an $n$-qubit quantum register.
All one-element supports (vertices) are the APS states.
The remaining supports can have $0 \le c \le n-1$ common bit(s) in the same position(s).
There are $\binom{n}{c}$ ways to choose a subset of $c$ positions from $n$ positions and $2^c$ ways to assign bits to these positions.

The remaining $n-c-1$ positions must ensure that in none of them do all $k$ strings agree. 
The $k$ strings are identical in the $c$ positions, so choosing them is equivalent to selecting $k$ distinct vectors from Boolean space $\{0,1\}^{n-c}$ such that no coordinate in these vectors is constant for all $k$ vectors.
The total number of such $k$-subsets is $\binom{2^{n-c}}{k}$, but we need to subtract cases where at least one coordinate is constant, which can be done using the inclusion-exclusion principle.
Hence, we sum over $m$, the number of coordinates forced to be constant. 
There are $\binom{n-c}{m}$ ways to choose a subset of $m$ coordinates from $n-c$ coordinates and $2^m$ ways to assign bits to these coordinates.
The effective space size becomes $2^{n-c-m}$, so the count for those is $\binom{2^{n-c-m}}{k}$, with sign $(-1)^m$.
The inclusion–exclusion provides the sum in the formula \eqref{eq:NofCMBc_supports}, where $\binom{n}{k}=0$ for $0 \le n < k$ if the binomial coefficient is defined in terms of a falling factorial.
The formula \eqref{eq:NofCMBc_supports} counts the number of ways one can pick $1 \le k \le 2^{n-c}$ bitstrings from the set of all $2^n$ bitstrings of length $n$, so that each of these strings has only $0 \le c \le n-1$ bit(s) in common in the same position(s), which completes the proof.
\end{proof}

For example, for $n=3$ (cube), $k=2$ (all possible edges), and $c=1$ (only the face diagonals) the formula \eqref{eq:NofCMBc_supports} takes the form
\begin{equation}
\begin{split}
&|\text{CMB}_1|(3,2)
= \binom{3}{1}2^1 \sum_{m=0}^{3-1-1} (-1)^m \binom{3-1}{m} 2^m \binom{2^{3-1-m}}{2} = \\
&= 6 \left( 1 \cdot 1 \cdot 1 \cdot 6 - 1 \cdot 2 \cdot 2 \cdot 1  \right)= 6 \left( 6 - 4 \right) = 12,
\end{split}
\end{equation}
summing all six edges on each face of a cube (including face diagonals), subtracting four blue 1-edges having two common bits in the same position, and multiplying the result by six faces of the cube to count the states $\alpha_1\ket{000} + \alpha_4\ket{011}$, etc. (green edges in Fig.~\ref{Fig:support}(c)).

\bigskip
The distributions of $|\text{PDS}_c|(n,k)$ states are listed in the Table~\ref{Table:Supports} as functions of the support size $k$ and summed in the Table~\ref{Table:Supports_PAs} along with the values given by the formula \eqref{eq:NofCMBc_supports} for $1 \le n \le 4$ and $0 \le c \le n-1$. 
They were numerically cross-validated for $n \le 4$ (cf. Section~\ref{sec:Data_Accessibility}) by calculating the eigenvalues $\lambda_j$ of the reduced density matrices \eqref{eq:reduced_density_matrix} for each of $2^{2^n}-1$ quantum states corresponding to distinct supports for each of $2^{n-1}-1$ possible bipartitions, assuming equal or arbitrary PAs. 
The Schmidt decomposition \eqref{eq:Schmidt_decomposition} certifies that a given state is separable along a given bipartition if $\lambda_j \approx 1$ for some $j$.
For example, the PDS$_1$ state 
\begin{equation}
\ket{A_{3,4}} = \sqrt{\frac{1}{3}}\ket{100} + \sqrt{\frac{1}{6}}\ket{101} + \sqrt{\frac{1}{6}}\ket{110} + \sqrt{\frac{1}{3}}\ket{111}
\end{equation}
has the following eigenvalues of the reduced density matrix $\rho_{B}= \text{Tr}_C(\rho_{A_{3,4}})$ \eqref{eq:reduced_density_matrix}
\begin{equation}
\begin{split}\label{eq:numerical_ex}
\lambda_{\{1|23\}} &= \left\{0,      1 \right\},\\   
\lambda_{\{2|13\}} &= \lambda_{\{3|12\}} = \left\{\frac{3-2\sqrt{2}}{6}, \frac{3+2\sqrt{2}}{6}\right\},\\
\end{split}
\end{equation}
and thus it is separable only across the partition $\{1|23\}$, while for the remaining two partitions it has a fractional, weak entanglement entropy \eqref{eq:entanglement_entropy} $\mathcal{S}_A \approx 0.1873$.

\begin{lemma}\label{th:inseparable_for_all_PAs}
The number of states that are inseparable for all normalized PAs grows super-exponentially as a function of $n$ and for $n\ge2$ corresponds to the number of ways to choose a collection $\mathcal{C}$ of subsets $\mathcal{S}$ of $\mathcal{U} \in \{1,2,\dots,n\}$ such that $\cup\{\mathcal{S} \in \mathcal{C}\} = \mathcal{U}$ and $\cap\{\mathcal{S} \in \mathcal{C}\} = \emptyset$ (OEIS sequence \href{https://oeis.org/A131288}{A131288}).
\end{lemma}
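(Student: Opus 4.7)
The plan is to identify the CMB$_0$ supports with the collections counted by OEIS A131288 via a standard bijection between binary strings and subsets, and then derive the super-exponential growth from an elementary counting argument.

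First I would set up the encoding: identify each computational-basis ket $\ket{x_1 x_2 \dots x_n}$ with the subset $\mathcal{S} := \{i \in \mathcal{U} : x_i = 1\} \subseteq \mathcal{U} = \{1, 2, \dots, n\}$. A support is a set of distinct basis kets, so under this identification it becomes a collection $\mathcal{C}$ of distinct subsets of $\mathcal{U}$, and the map is clearly a bijection between supports on $\{0,1\}^n$ and set-collections $\mathcal{C} \subseteq 2^{\mathcal{U}}$.

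Next I would translate the CMB$_0$ condition. By Definitions~\ref{def:CBS} and \ref{def:PDS}, a PDS$_0$ state is precisely one whose support has no constant coordinate. Coordinate $i$ is constant across the support iff either $x_i = 0$ for every ket (equivalently $i \notin \bigcup \mathcal{C}$) or $x_i = 1$ for every ket (equivalently $i \in \bigcap \mathcal{C}$). Negating this for every $i \in \mathcal{U}$ yields $\bigcup \mathcal{C} = \mathcal{U}$ and $\bigcap \mathcal{C} = \emptyset$, which is exactly the defining property of A131288. This establishes the enumeration for every $n \ge 2$ (for $n = 1$ both classes are empty, consistent with A131288 starting at $0$).

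For the growth estimate I would argue elementarily. The total number of nonempty supports is $2^{2^n} - 1$, while the number of supports contained in any one of the $2n$ coordinate half-cubes $\{x : x_i = b\}$ is at most $2^{2^{n-1}}$. A union bound gives $|\text{CMB}_0|(n) \ge 2^{2^n} - 1 - 2n \cdot 2^{2^{n-1}}$, which is doubly exponential in $n$ and hence a fortiori super-exponential. Alternatively one may substitute $c = 0$ in \eqref{eq:NofCMBc_supports} and sum over $k$: the $m = 0$ contribution is $\sum_{k} \binom{2^n}{k} = 2^{2^n}$, and the remaining inclusion–exclusion terms are bounded by $2n \cdot 2^{2^{n-1}}$, reproducing the same asymptotic.

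The main obstacle is the bookkeeping in the bijection: one must verify that both sides agree on what counts as a collection (supports are genuine sets of kets, with no multiplicity or ordering, matching A131288's convention of counting collections of \emph{distinct} subsets), and that the $k \ge 2$ boundary condition in Definition~\ref{def:CBS} is automatic, since a singleton $\mathcal{C} = \{\mathcal{S}\}$ would force $\mathcal{S} = \mathcal{U} = \emptyset$, impossible for $n \ge 1$. Once this dictionary is nailed down, every remaining step is a direct verification.
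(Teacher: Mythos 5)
Your proof is correct, but it takes a genuinely different route from the paper's. The paper proves the lemma by specializing the closed-form count \eqref{eq:NofCMBc_supports} to $c=0$, summing over $k\ge 2$, and algebraically reducing the result to $\sum_{m=0}^{n-1}(-1)^m\binom{n}{m}2^m\bigl(2^{2^{n-m}}-1-2^{n-m}\bigr)$, which it then asserts coincides with the formula for OEIS A131288; the combinatorial meaning of the sequence (collections with full union and empty intersection) is never derived, only matched numerically against the sequence's definition. You instead prove that correspondence directly: the bijection sending a ket to its set of $1$-coordinates turns a support into a collection $\mathcal{C}\subseteq 2^{\mathcal{U}}$, and the CMB$_0$ condition ``no coordinate is constant'' translates coordinatewise into $\bigcup\mathcal{C}=\mathcal{U}$ and $\bigcap\mathcal{C}=\emptyset$, with the $k\ge 2$ restriction absorbed automatically since a singleton or empty collection cannot satisfy both conditions for $n\ge 1$. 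This is arguably the more complete argument, since it establishes the set-theoretic characterization claimed in the lemma statement rather than relying on an unproved ``simplifies to'' step, and your union bound $|\text{CMB}_0|(n)\ge 2^{2^n}-1-2n\cdot 2^{2^{n-1}}$ supplies an explicit justification of the super-exponential growth that the paper leaves implicit (the bound is vacuous for $n=2$ but that is irrelevant to the asymptotic claim). What the paper's route buys in exchange is consistency with Theorem~\ref{th:states_by_support_size}: it exhibits the inseparable count as the $c=0$ marginal of the general formula, tying the lemma to the refined-by-$k$ data in Table~\ref{Table:Supports}, whereas your bijection bypasses that formula entirely.
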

\begin{proof}
The number of such supports is given by summing all $|\text{CMB}_0|(n,k)$ factors given by the formula \eqref{eq:NofCMBc_supports} over $k \ge 2$. Thus
\begin{equation}\label{eq:inseparable_for_all_PAs}
\begin{split}
&\sum_{k=2}^{2^n} |\text{CMB}_0|(n,k) = \\
&=\sum_{k=2}^{2^n} \binom{n}{0}2^0 \sum_{m=0}^{n-0-1} (-1)^m \binom{n-0}{m} 2^m \binom{2^{n-0-m}}{k} =\\
&=\sum_{m=0}^{n-1} (-1)^m \binom{n}{m} 2^m \sum_{k=2}^{2^n} \binom{2^{n-m}}{k} =\\
&=\sum_{m=0}^{n-1} (-1)^m \binom{n}{m} 2^m \left[ \sum_{k=0}^{2^n} \binom{2^{n-m}}{k} - \binom{2^{n-m}}{0} - \binom{2^{n-m}}{1} \right] =\\
&=\sum_{m=0}^{n-1} (-1)^m \binom{n}{m} 2^m \left( 2^{2^{n-m}}-1-2^{n-m} \right),\\
\end{split}
\end{equation}
simplifies to the formula of OEIS sequence \href{https://oeis.org/A131288}{A131288} $\{1_1,7_2,193_3,63775_4,\dots\}$ for $n\ge1$.
\end{proof}
For example, for $n=2$, the set $\mathcal{U} \in \{1,2\}$ has power set $\{\emptyset,\{1\},\{2\},\{1,2\}\}$ and there are seven ways to choose such a collection $\mathcal{C}$ of subsets $\mathcal{S}$ of $\mathcal{U}$ with full union and empty intersection corresponding to the states inseparable across all bipartitions for all PA values, as listed in Table~\ref{Table:A131288}. 
\begin{table}[htbp]
\caption{\label{Table:A131288} 
Collections $\mathcal{C}$ of subsets $\mathcal{S}$ of $\mathcal{U} \in \{1,2\}$ with full union and empty intersection and 2-qubit states inseparable for all PAs.}
\begin{tabular}{|c|c|ccccc|}
\hline
  & $\mathcal{C}$                       & State            & $\emptyset$        & $\{1\}$             & $\{2\}$             & $\{1,2\}$ \\
\hline
1 & $\{\emptyset,\{1\},\{2\},\{1,2\}\}$ & $\ket{A_{4,4}}=$ & $\alpha_1\ket{00}$ & $+\alpha_3\ket{10}$ & $+\alpha_2\ket{01}$ & $+\alpha_4\ket{11}$ \\
2 & $\{\emptyset,\{1\},\{2\}\}$         & $\ket{A_{4,3}}=$ & $\alpha_1\ket{00}$ & $+\alpha_3\ket{10}$ & $+\alpha_2\ket{01}$ &                    \\
3 & $\{\emptyset,\{2\},\{1,2\}\}$       & $\ket{A_{4,3}}=$ & $\alpha_1\ket{00}$ & $+\alpha_3\ket{10}$ &                     & $+\alpha_4\ket{11}$ \\
4 & $\{\emptyset,\{1\},\{1,2\}\}$       & $\ket{A_{4,3}}=$ & $\alpha_1\ket{00}$ &                     & $+\alpha_2\ket{01}$ & $+\alpha_4\ket{11}$ \\
5 & $\{\{1\},\{2\},\{1,2\}\}$           & $\ket{A_{4,3}}=$ &                    & $+\alpha_3\ket{10}$ & $+\alpha_2\ket{01}$ & $+\alpha_4\ket{11}$ \\
6 & $\{\emptyset, \{1,2\}\}$            & $\ket{A_{4,2}}=$ & $\alpha_1\ket{00}$ &                     &                     & $+\alpha_4\ket{11}$ \\
7 & $\{\{1\},\{2\}\}$                   & $\ket{A_{4,2}}=$ &                    & $+\alpha_3\ket{10}$ & $+\alpha_2\ket{01}$ &                    \\
\hline
\end{tabular}
\end{table}

\begin{lemma}
The maximum support size of a CMB$_c$ support is $k_{\text{max}}=2^{n-c}$.
\end{lemma}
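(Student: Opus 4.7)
The plan is to establish the bound in two steps: first an upper bound from the geometric reformulation of a CMB$_c$ support, and second a matching construction, verifying along the way that the maximality of $c$ in Definition~\ref{def:CBS} is preserved.

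For the upper bound, I would fix an arbitrary CMB$_c$ support and use the fact, already recorded after Definition~\ref{def:CBS}, that such a support lies inside an $(n-c)$-dimensional coordinate face of the Boolean hypercube. Concretely, if positions $i_1,\dots,i_c$ carry fixed bits across all kets, every ket in the support is uniquely determined by its bits in the remaining $n-c$ coordinates. Distinct kets give distinct bitstrings in $\{0,1\}^{n-c}$, so the support has cardinality at most $2^{n-c}$.

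For sharpness, I would exhibit the support obtained by choosing any bit pattern for the $c$ fixed coordinates and letting the remaining $n-c$ coordinates range over all of $\{0,1\}^{n-c}$. This has size exactly $2^{n-c}$, and every such ket is a distinct computational-basis vector, so it is a legitimate support of a pure state of the form \eqref{eq:state}. What still has to be checked is that this support is CMB$_c$ in the strict sense of Definition~\ref{def:CBS}, namely that $c$ really is the largest integer with the common-bit property. Since all $2^{n-c}$ binary patterns appear on the $n-c$ free coordinates, and $n-c\ge 1$ whenever $c\le n-1$, every one of those coordinates takes both values $0$ and $1$ across the support. Hence no coordinate beyond the chosen $c$ is constant, confirming maximality.

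The only delicate step is this maximality verification; without it one would merely know that the configuration has $2^{n-c}$ kets inside some CMB$_{c'}$ class with $c' \ge c$, rather than realizing the CMB$_c$ class itself. With that check in hand, the upper and lower bounds coincide and the lemma follows.
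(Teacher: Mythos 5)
Your proposal is correct and rests on the same core observation as the paper's proof: with $c$ coordinates fixed across the support, each ket is determined by its remaining $n-c$ bits, so the support size is bounded by $2^{n-c}$. You are in fact more complete than the paper, which stops at this upper bound; your additional verification that the full $(n-c)$-dimensional coordinate face attains the bound \emph{and} genuinely belongs to the CMB$_c$ class (every free coordinate takes both values, so no further coordinate is constant and the maximality of $c$ in Definition~\ref{def:CBS} holds) is a worthwhile tightening that the paper leaves implicit.
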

\begin{proof}
Since $c$ out of $n$ bits are the same in all basis kets of the support, the remaining $n-c$ bits must be diversified in all the kets, and the maximum number of such sequences is $2^{n-c}$.
CMB$_c$ supports correspond to Hamming-weight-constrained subcubes.
\end{proof}
\begin{lemma}
The number of CMB$_c$ supports having the maximum support size $k_{\text{max}}$ is
\begin{equation}
|\text{CMB}_c|(n,k_{\text{max}}) = \binom{n}{c}2^c.
\end{equation}
\end{lemma}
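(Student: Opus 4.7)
The plan is to prove this by two independent routes, both short. The direct combinatorial route is the more illuminating one and goes first; the algebraic route serves as a cross-check using Theorem~\ref{th:states_by_support_size}.

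For the direct argument, I would observe that by the preceding lemma, if a CMB$_c$ support attains the maximum size $k_{\max} = 2^{n-c}$, then the $n-c$ non-common coordinates must realize \emph{every} one of the $2^{n-c}$ possible bit patterns; otherwise the support size would be strictly smaller. Consequently, such a support is uniquely specified by the data of (i) which $c$ of the $n$ coordinate positions are the common ones, and (ii) what bit value is taken at each of those $c$ positions. This yields $\binom{n}{c}$ choices for the position set and $2^c$ choices for the fixed pattern, and the two choices are independent, giving the count $\binom{n}{c}2^c$. I would also note in passing that maximality of $c$ in Definition~\ref{def:CBS} is automatic here: once all $2^{n-c}$ patterns appear across the remaining $n-c$ coordinates, no additional coordinate can be constant, so $c$ is indeed the largest such integer and there is no double counting.

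For the algebraic cross-check, I would simply substitute $k = 2^{n-c}$ into formula \eqref{eq:NofCMBc_supports}. In the inclusion--exclusion sum, the binomial $\binom{2^{n-c-m}}{2^{n-c}}$ vanishes whenever $m \ge 1$, because then $2^{n-c-m} < 2^{n-c}$; only the $m=0$ term survives, contributing $\binom{2^{n-c}}{2^{n-c}} = 1$. This collapses the whole expression to $\binom{n}{c}2^c$, matching the combinatorial count.

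There is no real obstacle here; the only subtle point is making sure the maximality of $c$ in Definition~\ref{def:CBS} is not violated when we saturate the support, which the observation above settles cleanly.
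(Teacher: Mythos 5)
Your proposal is correct, and it is richer than the paper's own proof, which consists solely of your second route: the paper simply substitutes $k_{\max}=2^{n-c}$ into formula \eqref{eq:NofCMBc_supports} and notes (implicitly) that the $m\ge 1$ terms of the inclusion--exclusion sum vanish because $\binom{2^{n-c-m}}{2^{n-c}}=0$ when $2^{n-c-m}<2^{n-c}$, leaving only the prefactor $\binom{n}{c}2^c$. Your primary, direct combinatorial argument is a genuine addition: it identifies the maximal-size CMB$_c$ supports bijectively with the $(n-c)$-dimensional coordinate faces of the $n$-cube (choose the $c$ fixed positions, then the $c$ fixed bits), which explains \emph{why} the answer is the face count $\binom{n}{c}2^{n-(n-c)}$ rather than merely verifying it, and it connects cleanly to the geometric remark following Definition~\ref{def:CBS}. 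Your attention to the maximality of $c$ --- that once all $2^{n-c}$ patterns appear on the free coordinates, none of them can be constant (using $n-c\ge 1$, i.e.\ $c\le n-1$), so there is no double counting across different values of $c$ --- addresses a point the paper's substitution glosses over, since Theorem~\ref{th:states_by_support_size} already builds that maximality into its inclusion--exclusion. In short: the paper's route is a one-line corollary of Theorem~\ref{th:states_by_support_size}; yours additionally gives an independent structural proof, and the agreement of the two serves as a consistency check on the theorem itself.
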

\begin{proof}
This follows from substituting $k_{\text{max}}$ into the equation \eqref{eq:NofCMBc_supports}.
For example, there are $2n$ CMB$_1$ supports having such maximum support size, defined by ($n-1$)-dimensional facets of $n$-cube, as they have the largest support size for $n$ that can be partitioned across the same bipartition.
\end{proof}
For example, the maximum support size for the PDS$_1$ state of three qubits is $2^{3-1}=4$ and there are six states of the form
\begin{equation}\label{eq:max_support_size_example}
\begin{split}
&\ket{A_{3,4}}=\\
&=\alpha_1 \ket{0_10_20_3} + \alpha_2 \ket{0_10_21_3} + \alpha_3 \ket{0_11_20_3} + \alpha_4 \ket{0_11_21_3} = \\
&=\ket{0_1} \otimes \left( \alpha_1 \ket{0_20_3} + \alpha_2 \ket{0_21_3} + \alpha_3 \ket{1_20_3} + \alpha_4 \ket{1_21_3} \right) \\
\end{split}    
\end{equation}
in this case separable only across the $\{1|23\}$ bipartition for all normalized PAs.

Certain states are separable only for specific PAs.
Therefore, we introduce the last definition.

\begin{definition}\label{def:ADS}
We call the state $\ket{A}$ an amplitude-dependent separable (ADS$_{j,c}$) state, where $0 \le c < j \le n-1$, if its PAs can be arranged in a rank one $k_B \times k_C$ matrix.
\end{definition}
The PAs $\alpha$ matrix is an outer product of one column and one row matrices of PAs of the states $\ket{B}$ and $\ket{C}$ of the tensor product \eqref{eq:separable_state}
\begin{equation}\label{eq:alpha_matrix}
\begin{bmatrix}
\alpha_{1   1} & \alpha_{1   2} & \dots & \alpha_{1   k_C}\\
\alpha_{2   1} & \alpha_{2   2} & \dots & \alpha_{2   k_C}\\
\dots          & \dots          & \dots & \dots \\
\alpha_{k_B 1} & \alpha_{k_B 2} & \dots & \alpha_{k_B k_C}\\
\end{bmatrix}=
\begin{bmatrix}
\beta_1 \\ \beta_2 \\ \dots \\ \beta_{k_B}
\end{bmatrix}
\begin{bmatrix}
\gamma_1 & \gamma_2 & \dots & \gamma_{k_C}
\end{bmatrix}.
\end{equation}

States and supports defined in Definitions~\ref{def:CBS}-\ref{def:ADS} are listed in Fig.~\ref{Fig:taxonomy}.
Unlike APS and PDS$_c$ states, the separability of the ADS$_{j,c}$ is not a property of the support alone: the support merely permits separability, which is realized only when the PA matrix factors as a rank-1 outer product.
\begin{figure}[htbp]
\centering
\includegraphics[width=\columnwidth]{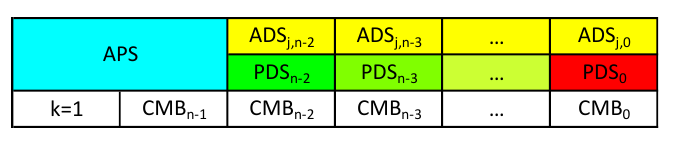}
\caption{\label{Fig:taxonomy}
Taxonomy of the quantum states and supports introduced in this study.
}
\end{figure}
\begin{figure*}[htbp]
\centering
\includegraphics[width=\textwidth]{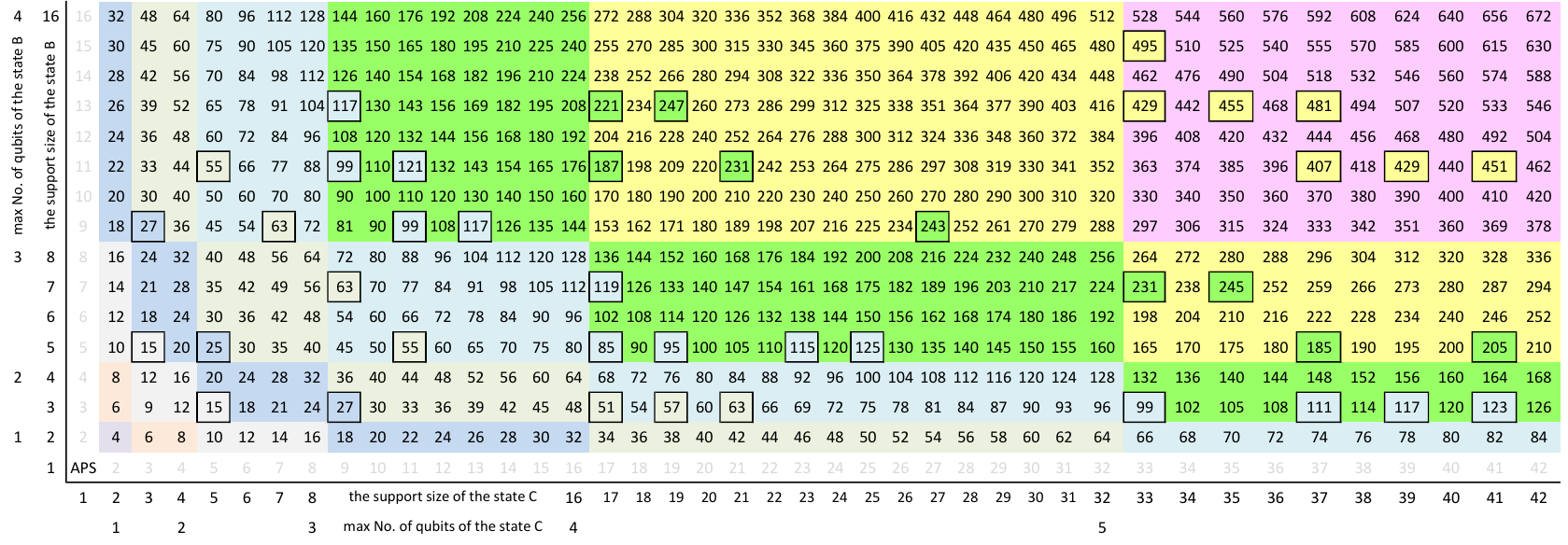}
\caption{\label{Fig:ADS}
Possible support sizes $k$ (products of the support sizes $k_B$ in the left column and $k_C$ in the bottom row) of ADS$_{c,j}$ states for $2 \le n \le 7$ qubits, indicated with different colors.
Forbidden sizes in boxes (OEIS sequence \href{https://oeis.org/A390536}{A390536}) are those where no bipartition can factor the support into $k_B × k_C$ satisfying the equation \eqref{eq:ADSineq} (see text for details).
}
\end{figure*}
\begin{lemma}\label{Th:ADS_possible_support_sizes}
An $n$-qubit state with support size $k$ can be an ADS state only if there exists a bipartition of the $n=l+m$ qubits ($l,m \ge 1$) and 
$k$ is a composite number satisfying
\begin{equation}\label{eq:ADSineq}
k = k_B \cdot k_C \le 2^{\lceil \log_2(k_B) \rceil + \lceil \log_2(k_C) \rceil},
\end{equation}
where $k_B, k_C \ge 2$ are the support sizes of these qubits.
\end{lemma}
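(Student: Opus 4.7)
The plan is to exploit the fundamental fact that the support of a tensor product is the Cartesian product of the supports of its factors, and then to translate the dimensional constraints on each factor into the stated inequality.

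First, I would unpack the ADS hypothesis. If $\ket{A}$ is ADS, then by Definition~\ref{def:ADS} (via the rank-one factorization of the PA matrix \eqref{eq:alpha_matrix}) there exists a bipartition $n = l + m$ with $l,m \ge 1$ across which $\ket{A} = \ket{B} \otimes \ket{C}$ holds for the specific amplitudes under consideration. Writing the kets of $\ket{B}$ and $\ket{C}$ in the computational basis and distributing the tensor product shows that every basis ket appearing in $\ket{A}$ has the form $\ket{b}\otimes\ket{c}$ with $\ket{b}\in\mathrm{supp}(\ket{B})$ and $\ket{c}\in\mathrm{supp}(\ket{C})$, and conversely every such pair produces a non-zero amplitude (as it is the product of two non-zero amplitudes). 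Hence $\mathrm{supp}(\ket{A}) = \mathrm{supp}(\ket{B}) \times \mathrm{supp}(\ket{C})$ as sets, which immediately gives the multiplicative relation $k = k_B \cdot k_C$.

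Second, I would argue that $k_B, k_C \ge 2$. If, say, $k_B = 1$, then $\ket{B}$ is a single basis ket, so all $k$ kets of $\ket{A}$ share the fixed bit string on the $l$-qubit side; this is precisely a CMB$_c$ support with $c \ge l \ge 1$, which places $\ket{A}$ among the APS or PDS$_c$ states of Definitions~\ref{def:APS}–\ref{def:PDS} rather than ADS. Thus $k_B, k_C \ge 2$, and $k = k_B k_C$ is composite.

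Third, the dimensional bound is forced by the size of the local Hilbert spaces. Since $\mathrm{supp}(\ket{B}) \subseteq \{0,1\}^l$, we have $k_B \le 2^l$, equivalently $l \ge \lceil \log_2 k_B \rceil$; symmetrically $m \ge \lceil \log_2 k_C \rceil$. Adding these and using $l+m = n$ yields $n \ge \lceil \log_2 k_B \rceil + \lceil \log_2 k_C \rceil$, and multiplying the two bounds $k_B \le 2^{\lceil \log_2 k_B \rceil}$ and $k_C \le 2^{\lceil \log_2 k_C \rceil}$ gives \eqref{eq:ADSineq}; the fact that this ceiling sum does not exceed $n$ is precisely what guarantees that a bipartition realizing the factorization actually exists.

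The only mildly subtle step is the second one, where one must invoke the taxonomy (Definitions~\ref{def:CBS}–\ref{def:ADS} and Fig.~\ref{Fig:taxonomy}) to rule out $k_B = 1$ or $k_C = 1$; otherwise the product $k = k_B k_C$ would not be forced to be composite. Everything else reduces to the standard support-of-tensor-product identity and the pigeonhole-type bound $k_B \le 2^l$. I would keep the write-up short, emphasizing that the inequality \eqref{eq:ADSineq} is a necessary condition encoding the existence of a bipartition whose two sides are large enough to individually host the factor supports.
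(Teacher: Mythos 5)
Your proposal is correct and follows essentially the same route as the paper's (much terser) proof: both rest on the Cartesian-product structure of the support of a tensor product, giving $k = k_B k_C$, together with the dimension bounds $k_B \le 2^l$, $k_C \le 2^m$ for the bipartition $n = l+m$. Your write-up merely makes explicit two points the paper leaves implicit --- the justification that $k_B, k_C \ge 2$ (a factor with a one-ket support would make that bipartition support-guaranteed, i.e.\ of PDS$_c$ rather than ADS type) and the observation that the real content of \eqref{eq:ADSineq} is $\lceil \log_2 k_B \rceil + \lceil \log_2 k_C \rceil \le n$ --- both of which are consistent with the paper's intent.
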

\begin{proof}
The smallest composite number is four, so two qubits achieve the minimum support size of an ADS state ($k=4$).
Larger support sizes must be either even ($2\times\dots$) to provide separability across at least one bipartition or composite numbers satisfying
the inequalities $k_B \le 2^l$ and $k_C \le 2^m$ for $n=l+m$.
\end{proof}
\begin{lemma}\label{th:unconditionally_entangled}
A state having a support size $2^{n-1} < k \le 2^n$ that is prime or violates the inequality \eqref{eq:ADSineq} is unconditionally entangled; there is no bipartition allowing its separability even for amplitude‑dependent tuning.
\end{lemma}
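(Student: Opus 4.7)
The plan is to argue by contradiction: assume that the state $\ket{A}$ with support size $k$ satisfying $2^{n-1} < k \le 2^n$ (and either prime or violating \eqref{eq:ADSineq}) admits a tensor product decomposition $\ket{A}=\ket{B}\otimes\ket{C}$ across some bipartition $n=l+m$ with $l,m\ge 1$, for \emph{some} choice of normalized PAs, and then derive a purely combinatorial obstruction from the support. Routing the argument through the support is what will make the conclusion unconditional in the amplitudes, which is the main point of the lemma.

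The first step is to observe that tensor-product separability forces the support to factor as a Cartesian product. Expanding $\ket{B}\otimes\ket{C}=\sum_{b\in\text{supp}(\ket{B})}\sum_{c\in\text{supp}(\ket{C})}\beta_b\gamma_c\ket{b}\ket{c}$, the $\ket{b}\ket{c}$ are distinct computational-basis kets and $\beta_b\gamma_c\ne 0$ on the respective supports, so $\text{supp}(\ket{A})=\text{supp}(\ket{B})\times\text{supp}(\ket{C})$. In particular $k=k_B\cdot k_C$, together with the dimensional caps $k_B\le 2^l$ and $k_C\le 2^m$. The next step eliminates singleton factors: if $k_B=1$, then $k=k_C\le 2^m\le 2^{n-1}$, contradicting $k>2^{n-1}$, and symmetrically for $k_C=1$, so both $k_B,k_C\ge 2$.

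From here the two cases are immediate. If $k$ is prime, no factorization $k=k_B k_C$ with $k_B,k_C\ge 2$ exists, contradicting the Step 1 conclusion. If instead $k$ violates \eqref{eq:ADSineq}, then for every such factorization the implicit dimensional requirement $\lceil\log_2 k_B\rceil+\lceil\log_2 k_C\rceil\le n$ fails, so no bipartition $n=l+m$ can accommodate $k_B\le 2^l$, $k_C\le 2^m$ simultaneously; again the hypothesized separability collapses. Since the obstruction is on the support itself, not on the amplitudes, no ADS$_{j,c}$ realization via rank-1 factorization of the PA matrix \eqref{eq:alpha_matrix} can rescue separability either, which is precisely the ``unconditionally entangled'' conclusion.

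The only conceptually non-trivial step is the first: that the support of $\ket{B}\otimes\ket{C}$ is \emph{exactly} the Cartesian product $\text{supp}(\ket{B})\times\text{supp}(\ket{C})$, with no cancellation possible among distinct basis kets. After that observation is made explicit, the lemma reduces to an elementary divisibility and pigeonhole argument on $k$ relative to the bipartitions of $n$, and the range restriction $k>2^{n-1}$ plays the single role of forbidding the trivial factor $k_B=1$ or $k_C=1$ that would otherwise circumvent the prime/inequality obstruction.
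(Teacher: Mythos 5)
Your proof is correct, and it is worth noting that the paper itself offers \emph{no} proof of this lemma: the statement is followed only by the $15=3\cdot 5$ example and is implicitly treated as a corollary of Lemma~\ref{Th:ADS_possible_support_sizes}, whose own proof merely asserts the constraints $k_B\le 2^l$ and $k_C\le 2^m$ without deriving them. You supply the one ingredient that makes the argument rigorous: the coefficient of $\ket{b}\ket{c}$ in $\ket{B}\otimes\ket{C}$ is exactly $\beta_b\gamma_c$, so no cancellation among distinct basis kets is possible and $\text{supp}(\ket{A})$ must equal the Cartesian product $\text{supp}(\ket{B})\times\text{supp}(\ket{C})$, forcing $k=k_Bk_C$ with the dimensional caps; and the hypothesis $k>2^{n-1}$ is precisely what excludes the singleton factorizations $1\times k$ and $k\times 1$, a point the paper never makes explicit. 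One caveat on your second case: the inequality \eqref{eq:ADSineq} as literally printed, $k_Bk_C\le 2^{\lceil\log_2 k_B\rceil+\lceil\log_2 k_C\rceil}$, holds for every pair $k_B,k_C\ge 1$ and therefore can never be ``violated''; you have (correctly, and consistently with the paper's $15=3\cdot5$ example and Fig.~\ref{Fig:ADS}) read the intended condition as the existence of a factorization with $\lceil\log_2 k_B\rceil+\lceil\log_2 k_C\rceil\le n$, but you should state that reinterpretation explicitly rather than attribute it to \eqref{eq:ADSineq} as written. With that reading made explicit, your argument is complete and strictly more rigorous than what the paper provides.
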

For example, even though $3 \cdot 5 = 15 < 16 = 2^4$,
$\lceil \log_2(3) \rceil + \lceil \log_2(5) \rceil = 2+3=5 > 4 = \lceil \log_2(15) \rceil$: the support size of two qubits is at most four, while five kets are required for the second state in the product $15=3 \cdot 5$.
Such states correspond to a \textit{genuinely multipartite entangled} \cite{Palazuelos2022genuinemultipartite} or \textit{fully inseparable} \cite{dur_separability_1999} states, such as Bell state, Greenberger–Horne–Zeilinger (GHZ) state, or W-state.

\begin{lemma}\label{th:ADS}
An ADS$_{j,c}$ state is separable across $2^j - 2^c$ bipartitions (OEIS sequence \href{https://oeis.org/A023758}{A023758}).
\end{lemma}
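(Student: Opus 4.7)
The plan is to work via the maximal tensor-product decomposition of a pure ADS$_{j,c}$ state and to reduce the bipartition count to a factor-assignment problem. I will first argue that the rank-one PA structure of Definition~\ref{def:ADS}, iterated as a Schmidt decomposition, forces $\ket{A}$ into a unique finest factorization
\[
\ket{A} = \ket{b_1} \otimes \cdots \otimes \ket{b_c} \otimes \ket{F_1} \otimes \cdots \otimes \ket{F_s},
\]
where the $\ket{b_i}$ are the $c$ single-qubit CMB factors and each $\ket{F_l}$ is an amplitude-induced irreducible block (on one or more qubits) with $s \ge 2$. The parameter $j$ is identified by $j = c + s - 1$, so the state carries exactly $p = j+1$ factors.

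Next I will show that a bipartition $\{S \mid \bar S\}$ separates $\ket{A}$ iff each factor $\ket{b_i}$ and $\ket{F_l}$ lies entirely on one side. The forward direction uses irreducibility of each $\ket{F_l}$: a further split would produce a rank-one factorization finer than the maximal one, contradicting irreducibility. The reverse direction is immediate by expanding the tensor product. Counting such bipartitions then becomes a set-assignment problem: of the $2^p$ ordered placements of the $p$ factors on two sides, discard the two trivial ones and quotient by the $S \leftrightarrow \bar S$ symmetry, giving $2^{p-1} - 1 = 2^{j} - 1$ factor-respecting bipartitions total.

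Finally I will subtract the bipartitions already guaranteed by the CMB$_c$ support alone, i.e., those in which the amplitude-induced split plays no role. By Lemma~\ref{th:partitions} applied to the $c$ common-bit qubits (with the $\ket{F_l}$ bundled as a single inseparable block), these number exactly $2^c - 1$. Subtracting yields the count of genuinely amplitude-dependent bipartitions
\[
(2^j - 1) - (2^c - 1) = 2^j - 2^c,
\]
which is the asserted formula and coincides with the binary-word pattern $1^{j-c}0^{c}$ of OEIS A023758.

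The principal obstacle lies in the first step: justifying that the finest factorization is unique and that the labels $(j,c)$ of Definition~\ref{def:ADS} coincide with $(c+s-1, c)$ read off from the decomposition. The iterated Schmidt argument is standard, but one must carefully exclude pathological cases in which a constant single-qubit factor could be re-absorbed into some $\ket{F_l}$, thereby shifting the $(j,c)$ labels. Insisting on maximality of the factorization (equivalently, minimality of each block $\ket{F_l}$) should resolve this, but it requires a clean uniqueness statement for the finest tensor decomposition of a pure qubit state.
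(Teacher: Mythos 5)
Your proof is correct and follows essentially the same route as the paper: the paper's entire argument is the subtraction $(2^j-1)-(2^c-1)=2^j-2^c$ of the $2^c-1$ support-guaranteed (PDS$_c$) bipartitions from the $2^j-1$ total separating bipartitions, with both counts imported from Lemma~\ref{th:partitions}. Your extra scaffolding --- the finest tensor factorization into $j+1$ atomic factors and the factor-respecting characterization of separating bipartitions --- merely makes rigorous what the paper leaves implicit, and your uniqueness worry is benign because the finest product decomposition of a pure state is unique (a single-qubit factor that could be ``re-absorbed'' into some block would contradict minimality of that block).
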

\begin{proof}
We have to exclude PDS$_{c}$ states separable across $2^c-1$ bipartitions for all PAs (if any, i.e. for $c > 0$), from the larger set containing also the states separable across $2^j-1$ bipartitions both for all and for specific PAs to find 
$(2^j-1) - (2^c-1) = 2^j - 2^c$ PAs specific bipartitions.
In other words, $j-c$ indexes the \textit{depth} of amplitude-dependent separability.
\end{proof}
By Lemma~\ref{th:ADS}, PDS$_c$ and APS states are mutually exclusive.
A support of a state has an inherent $PDS_c$ classification, defining its separability for arbitrary PAs.
That same state may also become separable across additional partitions (that were previously entangled across) if its PAs are fine-tuned.

It is always possible to adjust the PAs of the $\alpha$ matrix of an ADS$_{j,c}$ state so that it has rank one. 
In particular, equal PAs make the $\alpha$ matrix constant, which factors as an outer product of two all-ones vectors $\beta$ and $\gamma$ scaled by $\alpha$, and any such outer-product matrix has rank one.

Possible support sizes $k$ of ADS states for $2 \le n \le 7$ are shown in Fig.~\ref{Fig:ADS} along with forbidden sizes violating the inequality \eqref{eq:ADSineq}.
Fig.~\ref{Fig:ADS} also shows support sizes providing separability across different numbers of bipartitions.
For example, a 4-qubit state with $k=12$ (gray zone) can be a $l=1$, $m=3$, $k=12=2 \cdot 6$ ADS$_{1,0}$ state separable across one bipartition or a $l=m=2$, $k=12=3 \cdot 4$ ADS$_{2,0}$ state separable across three bipartitions. 

For example, the PAs of the state 
\begin{equation}
\begin{split}
&\ket{A_{4,10}} = \\
 &\alpha_{0000} \ket{0000} + \alpha_{0001} \ket{0001} + \alpha_{0010} \ket{0010} + \alpha_{0011} \ket{0011} +\\
+&\alpha_{0100} \ket{0100} + \alpha_{0101} \ket{0101} + \alpha_{0110} \ket{0110} + \alpha_{0111} \ket{0111} +\\
+&\alpha_{1010} \ket{1010} + \alpha_{1011} \ket{1011} \\
\end{split}    
\end{equation}
can be written as an outer product
\begin{equation}\label{eq:alpha_matrix65328}
\begin{bmatrix}
\alpha_{0000} & \alpha_{0001} \\
\alpha_{0010} & \alpha_{0011} \\
\alpha_{0100} & \alpha_{0101} \\
\alpha_{0110} & \alpha_{0111} \\
\alpha_{1010} & \alpha_{1011} \\
\end{bmatrix}=
\begin{bmatrix}
\beta_{000} \\ \beta_{001} \\ \beta_{010} \\ \beta_{011} \\ \beta_{101} 
\end{bmatrix} 
\begin{bmatrix}
\gamma_0 & \gamma_1
\end{bmatrix}
\end{equation}
as it is is an ADS$_{1,0}$ state separable across the bipartition $\{123|4\}$ as
\begin{equation}
\begin{split}
\ket{A_{4,10}} &=
\left( \beta_{000} \ket{000} + \beta_{001} \ket{001} + \beta_{010} \ket{010} + \right. \\
&+\left. \beta_{011} \ket{011} + \beta_{101} \ket{101} \right) \otimes \left( \gamma_0 \ket{0} + \gamma_1 \ket{1} \right)
\end{split}    
\end{equation}
if and only
\begin{equation}
  \frac{\alpha_{0000}}{\alpha_{0001}}
= \frac{\alpha_{0010}}{\alpha_{0011}}
= \frac{\alpha_{0100}}{\alpha_{0101}}
= \frac{\alpha_{0110}}{\alpha_{0111}}
= \frac{\alpha_{1010}}{\alpha_{1011}}
= \frac{\gamma_0}{\gamma_1},
\end{equation}
that is, when the columns of the $\alpha$ matrix \eqref{eq:alpha_matrix65328} are linearly dependent or equivalently if its rank is 1.
Similarly, the PAs of the state
\begin{equation}
\begin{split}
\ket{A_{4,9}} = &\alpha_{0101} \ket{0101} + \alpha_{0110} \ket{0110} + \alpha_{0111} \ket{0111} +\\
          &\alpha_{1001} \ket{1001} + \alpha_{1010} \ket{1010} + \alpha_{1011} \ket{1011} +\\
          &\alpha_{1101} \ket{1101} + \alpha_{1110} \ket{1110} + \alpha_{1111} \ket{1111}, \\          
\end{split}
\end{equation}
can be written as an outer product
\begin{equation}\label{eq:alpha_matrix1911}
\begin{bmatrix}
\alpha_{0101} & \alpha_{0110} & \alpha_{0111} \\
\alpha_{1001} & \alpha_{1010} & \alpha_{1011} \\
\alpha_{1101} & \alpha_{1110} & \alpha_{1111} \\          
\end{bmatrix}=
\begin{bmatrix}
\beta_{01} \\ \beta_{10} \\ \beta_{11}
\end{bmatrix} 
\begin{bmatrix}
\gamma_{01} & \gamma_{10} & \gamma_{11}
\end{bmatrix}
\end{equation}
and it is also the ADS$_{1,0}$ state separable only across the bipartition $\{12|34\}$ as
\begin{equation}
\begin{split}
\ket{A_{4,9}} &=\left(\beta_{01} \ket{01} + \beta_{10}\ket{10} + \beta_{11}\ket{11} \right) \\
          &\otimes \left( \gamma_{01}\ket{01} + \gamma_{10}\ket{10} + \gamma_{11}\ket{11} \right)\\          
\end{split}
\end{equation}
again if all the rows and columns of the $\alpha$ matrix \eqref{eq:alpha_matrix1911} are linearly independent (its rank is 1).
The state (we omit kets here for clarity)
\begin{equation}
\begin{split}
\ket{A_{4,4}} 
&= \alpha_{0000} + \alpha_{0001} + \alpha_{0110} + \alpha_{0111} =\\
&=\left( \beta_{000}  + \beta_{011} \right) \otimes \left( \gamma_{0} + \gamma_{1} \right)=\\        
&=\left( \beta_{0_20_3} + \beta_{1_21_3} \right) \otimes \left( \gamma_{0_10_4} + \gamma_{0_11_4} \right)\\  \end{split}
\end{equation}
is a PDS$_1$ support separable across the bipartition $\{1|234\}$ and 
it is also a ADS$_{2,1}$ state separable across two bipartitions $\{123|4\}$ and $\{23|14\}$, if the $\alpha$ matrix
\begin{equation}\label{eq:alpha_matrix49920}
\begin{bmatrix}
\alpha_{0000} & \alpha_{0001} \\
\alpha_{0110} & \alpha_{0111} \\
\end{bmatrix}=
\begin{bmatrix}
\beta_{0_10_20_3} \\ \beta_{0_11_21_3}
\end{bmatrix} 
\begin{bmatrix}
\gamma_{0_4} & \gamma_{1_4}
\end{bmatrix}=
\begin{bmatrix}
\beta_{0_20_3} \\ \beta_{1_21_3}
\end{bmatrix} 
\begin{bmatrix}
\gamma_{0_10_4} & \gamma_{0_11_4}
\end{bmatrix}
\end{equation}
has a rank of one, that is if
\begin{equation}
\alpha_{0000}\alpha_{0111} = \alpha_{0110}\alpha_{0001}.
\end{equation}

Any CMB$_c$ support having the maximum support size $2^{n-c}$ has basis kets with bits differing only in the same $u \coloneqq n-c \ge 2$ positions (i.e., spanning $2^u$ vertices of $u$-cube).
In this case, we can define $\alpha_{\hat{x}}$ as the PA corresponding to the ket being the bitwise complement of the relevant $u$ positions of the ket associated with a PA $\alpha_{x}$ and the separability condition for the PAs $\alpha_{x}$ and $\alpha_{\hat{x}}$ is
\begin{equation}\label{eq:separability_condition}
\alpha_{x} \alpha_{\hat{x}} = \text{const} \ne 0, \quad \forall x \in \{0,1\}^u.
\end{equation}
In other words, the products of the PAs associated with the vertices defining all the $2^{u-1}$ longest diagonals of the $u$-cube must be equal to each other.
In particular
\begin{equation}\label{eq:separability_condition_part}
\begin{array}{ll}
\alpha_{0}                 = \alpha_{1}                                                                             &\text{~if~} u=1, \\
\alpha_{00}\alpha_{11}     = \alpha_{01}\alpha_{10}                                                                 &\text{~if~} u=2, \\ 
\alpha_{000}\alpha_{111}   = \alpha_{001}\alpha_{110}   = \alpha_{010}\alpha_{101}   = \alpha_{100}\alpha_{011}     &\text{~if~} u=3, \\ 
\alpha_{0000}\alpha_{1111} = \alpha_{0001}\alpha_{1110} = \alpha_{0010}\alpha_{1101} = \alpha_{0011}\alpha_{1100} = &\\ 
\alpha_{0100}\alpha_{1011} = \alpha_{0101}\alpha_{1010} = \alpha_{0110}\alpha_{1001} = \alpha_{0111}\alpha_{1000}   &\text{~if~} u=4. \\ 
\dots
\end{array}
\end{equation}

For example, the state
\begin{equation}
\begin{split}
\ket{A_{4,4}} = &\alpha_{1100} \ket{1_1 1_2 0_3 0_4} + \alpha_{1101} \ket{1_1 1_2 0_3 1_4} +\\
          &\alpha_{1110} \ket{1_1 1_2 1_3 0_4} + \alpha_{1111} \ket{1_1 1_2 1_3 1_4} = \\
=& \ket{B_1} \otimes \ket{C_{234}} = \ket{B_2} \otimes \ket{C_{134}} = \ket{B_{12}} \otimes \ket{C_{34}}\\          
\end{split}
\end{equation}
has two common bits in the same positions 1 and 2, so it is a CMB$_2$ support separable across $2^2-1=3$ bipartitions $\{1|234\}$, $\{2|134\}$, or $\{12|34\}$ for all PAs. 
But if
\begin{equation}
\frac{\alpha_{1101}}{\alpha_{1100}} =
\frac{\alpha_{1111}}{\alpha_{1101}} =
\frac{\gamma_{10}} {\gamma_{11}} = 
\frac{\gamma_{111}}{\gamma_{110}}.
\end{equation}
it is also an ADS$_{3,2}$ state separable across four bipartitions $\{3|124\}$, $\{4|123\}$, $\{13|24\}$, and $\{14|23\}$ separable as
\begin{equation}
\begin{split}
&\ket{A} =\\
&=\left(\beta_{0} \ket{0_3} + \beta_{1}\ket{1_3} \right) \otimes \left( \gamma_{110}\ket{1_1 1_2 0_4} + \gamma_{111}\ket{1_1 1_2 1_4} \right) =\\
&=\left(\beta_{0} \ket{0_4} + \beta_{1}\ket{1_4} \right) \otimes \left( \gamma_{110}\ket{1_1 1_2 0_3} + \gamma_{111}\ket{1_1 1_2 1_3} \right) =\\
&=\left(\beta_{10} \ket{1_1 0_3} + \beta_{11}\ket{1_1 1_3} \right) \otimes \left( \gamma_{10}\ket{1_2 0_4} + \gamma_{11}\ket{1_2 1_4} \right) =\\
&=\left(\beta_{10} \ket{1_1 0_4} + \beta_{11}\ket{1_1 1_4} \right) \otimes \left( \gamma_{10}\ket{1_2 0_3} + \gamma_{11}\ket{1_2 1_3} \right).\\
\end{split}
\end{equation}
For the same reasons, the state \eqref{eq:numerical_ex} is separable across all three bipartitions after swapping PAS $\alpha_{110}$ with $\alpha_{111}$.

For example, the following state supported on four basis kets with two qubits spanned over four vertices of the 2-cube
\begin{equation}
\begin{split}
\ket{A_{4,4}} 
&=\alpha_{0_10_2} \ket{0_1 1 0_2 1} + \alpha_{0_11_2} \ket{0_1 1 1_2 1} + \\
&+\alpha_{1_10_2} \ket{1_1 1 0_2 1} + \alpha_{1_11_2} \ket{1_1 1 1_2 1} = \\
&=\left(\beta_{0_1}\ket{0_1 1}         + \beta_{1_1}     \ket{1_1 1}         \right) \otimes 
  \left(\gamma_{0_2}          \ket{0_2 1} + \gamma_{1_2}               \ket{1_2 1} \right)
\end{split}
\end{equation}
has a CMB$_2$ support separable across three bipartitions $\{2|134\}$, $\{13|24\}$, and $\{4|123\}$.
However, if 
$\alpha_{0_10_2} \alpha_{1_11_2} = \alpha_{0_11_2} \alpha_{1_10_2}$,
it is also an ADS$_{3,2}$ state separable across four bipartitions $\{1|234\}$, $\{12|34\}$, $\{3|124\}$, and $\{14|23\}$.

A 3-qubit register (we omit kets here for clarity)
\begin{equation}
\begin{split}
\ket{A_{4,8}} 
&= \alpha_{000}\dots + \alpha_{001}\dots + \alpha_{010}\dots + \alpha_{011}\dots \\
&+ \alpha_{100}\dots + \alpha_{101}\dots + \alpha_{110}\dots + \alpha_{111}\dots =\\
&\left(\beta_{0} + \beta_{1} \right) \otimes \left( \gamma_{00} + \gamma_{01} + \gamma_{10} + \gamma_{11} \right) \\
\end{split}
\end{equation}
can be an ADS$_{1,0}$ state separable across one bipartition $(\{1|23\})$
if the $\alpha$ matrix 
\begin{equation}\label{eq:alpha_matrix49920_1|23}
\begin{bmatrix}
\alpha_{\textcolor{myblue}{0}00} & \alpha_{\textcolor{myblue}{0}01} & \alpha_{\textcolor{myblue}{0}10} & \alpha_{\textcolor{myblue}{0}11} \\
\alpha_{\textcolor{myblue}{1}00} & \alpha_{\textcolor{myblue}{1}01} & \alpha_{\textcolor{myblue}{1}10} & \alpha_{\textcolor{myblue}{1}11} \\
\end{bmatrix}=
\begin{bmatrix}
\beta_{0} \\ \beta_{1}
\end{bmatrix} 
\begin{bmatrix}
\gamma_{00} & \gamma_{01} & \gamma_{10} & \gamma_{11}
\end{bmatrix}
\end{equation}
has rank one,
which is equivalent to 
\begin{equation}
\begin{split}
& \alpha_{000}\alpha_{101} = \alpha_{001}\alpha_{100} \quad \land \quad 
  \alpha_{000}\alpha_{110} = \alpha_{010}\alpha_{100} \quad \land \\ 
& \alpha_{000}\alpha_{111} = \alpha_{011}\alpha_{100} \quad \land \quad 
  \alpha_{001}\alpha_{110} = \alpha_{010}\alpha_{101} \quad \land \\ 
& \alpha_{001}\alpha_{111} = \alpha_{011}\alpha_{101} \quad \land \quad 
  \alpha_{010}\alpha_{111} = \alpha_{011}\alpha_{110}, \\ 
\end{split}    
\end{equation}
across one bipartition $(\{2|13\})$
if the $\alpha$ matrix 
\begin{equation}\label{eq:alpha_matrix49920_2|13}
\begin{bmatrix}
\alpha_{0\textcolor{myblue}{0}0} & \alpha_{0\textcolor{myblue}{0}1} & \alpha_{1\textcolor{myblue}{0}0} & \alpha_{1\textcolor{myblue}{0}1} \\
\alpha_{0\textcolor{myblue}{1}0} & \alpha_{0\textcolor{myblue}{1}1} & \alpha_{1\textcolor{myblue}{1}0} & \alpha_{1\textcolor{myblue}{1}1} \\
\end{bmatrix}=
\begin{bmatrix}
\beta_{0} \\ \beta_{1}
\end{bmatrix} 
\begin{bmatrix}
\gamma_{00} & \gamma_{01} & \gamma_{10} & \gamma_{11}
\end{bmatrix}
\end{equation}
has rank one, and
across one bipartition $(\{3|12\})$
if the $\alpha$ matrix 
\begin{equation}\label{eq:alpha_matrix49920_3|12}
\begin{bmatrix}
\alpha_{00\textcolor{myblue}{0}} & \alpha_{01\textcolor{myblue}{0}} & \alpha_{10\textcolor{myblue}{0}} & \alpha_{11\textcolor{myblue}{0}} \\
\alpha_{00\textcolor{myblue}{1}} & \alpha_{01\textcolor{myblue}{1}} & \alpha_{10\textcolor{myblue}{1}} & \alpha_{11\textcolor{myblue}{1}} \\
\end{bmatrix}=
\begin{bmatrix}
\beta_{0} \\ \beta_{1}
\end{bmatrix} 
\begin{bmatrix}
\gamma_{00} & \gamma_{01} & \gamma_{10} & \gamma_{11}
\end{bmatrix}
\end{equation}
has rank one.
It can also be the ADS$_{2,0}$ state separable across all $2^{2}-1=3$ bipartitions if 
$\alpha_{000}\alpha_{111} = \alpha_{001}\alpha_{110} = \alpha_{010}\alpha_{101} = \alpha_{011}\alpha_{100}$, i.e., if the products for all four main 3-cube diagonals are equal.
The 4-qubit register 
\begin{equation}
\begin{split}
&\ket{A} 
= \alpha_{0000}\dots + \alpha_{0001}\dots + \alpha_{0010}\dots + \alpha_{0011}\dots \\
&+ \alpha_{0100}\dots + \alpha_{0101}\dots + \alpha_{0110}\dots + \alpha_{0111}\dots \\
&+ \alpha_{1000}\dots + \alpha_{1001}\dots + \alpha_{1010}\dots + \alpha_{1011}\dots \\
&+ \alpha_{1100}\dots + \alpha_{1101}\dots + \alpha_{1110}\dots + \alpha_{1111}\dots = \\
&= \left( \beta_{0} + \beta_{1} \right) \otimes\\
&  \left(\gamma_{000} + \gamma_{001} + \gamma_{010} + \gamma_{011} + \gamma_{100} + \gamma_{101} + \gamma_{110} + \gamma_{111} \right) \\
&= \left( \beta_{00} + \beta_{01} + \beta_{10} + \beta_{11} \right) \otimes \left(\gamma_{00} + \gamma_{01} + \gamma_{10} + \gamma_{11} \right)
\end{split}
\end{equation}
can be an ADS$_{1,0}$ state separable across one of four possible bipartitions ($\{1|234\}$ is shown below) if the $\alpha$ matrix is factorable as
\begin{equation}\label{eq:alpha_matrix_1|234}
\begin{split}
&
\begin{bmatrix}
\alpha_{\textcolor{myblue}{0}000} & \alpha_{\textcolor{myblue}{0}001} & \alpha_{\textcolor{myblue}{0}010} & \alpha_{\textcolor{myblue}{0}011} &
\alpha_{\textcolor{myblue}{0}100} & \alpha_{\textcolor{myblue}{0}101} & \alpha_{\textcolor{myblue}{0}110} & \alpha_{\textcolor{myblue}{0}111} \\
\alpha_{\textcolor{myblue}{1}000} & \alpha_{\textcolor{myblue}{1}001} & \alpha_{\textcolor{myblue}{1}010} & \alpha_{\textcolor{myblue}{1}011} &
\alpha_{\textcolor{myblue}{1}100} & \alpha_{\textcolor{myblue}{1}101} & \alpha_{\textcolor{myblue}{1}110} & \alpha_{\textcolor{myblue}{1}111}\\
\end{bmatrix}=\\&
\begin{bmatrix}
\beta_{0} \\ \beta_{1}
\end{bmatrix} 
\begin{bmatrix}
\gamma_{000} & \gamma_{001} & \gamma_{010}  & \gamma_{011} & \gamma_{100} & \gamma_{101} & \gamma_{110} & \gamma_{111}
\end{bmatrix}
\end{split}
\end{equation}
an ADS$_{1,0}$ state separable across one of three bipartitions ($\{12|34\}$ is shown below) if the $\alpha$ matrix is factorable as
\begin{equation}\label{eq:alpha_matrix_12|34}
\begin{bmatrix}
\alpha_{\textcolor{myblue}{00}00} & \alpha_{\textcolor{myblue}{00}01} & \alpha_{\textcolor{myblue}{00}10} & \alpha_{\textcolor{myblue}{00}11} \\
\alpha_{\textcolor{myblue}{01}00} & \alpha_{\textcolor{myblue}{01}01} & \alpha_{\textcolor{myblue}{01}10} & \alpha_{\textcolor{myblue}{01}11} \\
\alpha_{\textcolor{myblue}{10}00} & \alpha_{\textcolor{myblue}{10}01} & \alpha_{\textcolor{myblue}{10}10} & \alpha_{\textcolor{myblue}{10}11} \\
\alpha_{\textcolor{myblue}{11}00} & \alpha_{\textcolor{myblue}{11}01} & \alpha_{\textcolor{myblue}{11}10} & \alpha_{\textcolor{myblue}{11}11} \\
\end{bmatrix}=
\begin{bmatrix}
\beta_{00} \\ \beta_{01} \\ \beta_{10} \\ \beta_{11}
\end{bmatrix} 
\begin{bmatrix}
\gamma_{00} & \gamma_{01} & \gamma_{10}  & \gamma_{11}
\end{bmatrix}
\end{equation}
and an ADS$_{3,0}$ state separable across all seven bipartitions if all PAs are equal to each other.

A $2$-cube (square) is the smallest $n$-cube that provides the support size for the ADS$_{1,1}$ state.
Even though one qubit ($1$-cube, segment) is always separable, the separability condition \eqref{eq:separability_condition} can be extended to this case ($j=1$), where it implies the equality of two PAs \eqref{eq:separability_condition_part}.
This is a specific case of the condition of equal superposition of a qubit; the ADS$_{1,0}$ state
\begin{equation}\label{eq:qubit}
\ket{A_{1,2}} = \frac{1}{\sqrt{2}}\left( \ket{0} + \ket{1} \right),
\end{equation}
with a vanishing relative phase between the basis kets:
on the complex plane $\alpha_{0} = \alpha_{1}$ represents the same point on the circle of radius $1/\sqrt{2}$, while $|\alpha_{0}|^2=|\alpha_{1}|^2=1/2$ represents all points of this circle.

\section{Applications}\label{sec:Applications}

In general, determining across which bipartition (if any) a state is separable is computationally exponential.
The computational complexity for finding all eigenvalues of a Hermitian matrix of size $M$ is $O(M^3)$.
In our case $M=2^{\lfloor n/2 \rfloor} \approx 2^{n/2}$ for the maximum number of qubits of the state $\ket{A}$ or $\ket{B}$ in the tensor product \eqref{eq:separable_state}, yielding $O(8^{n/2})$. Calculating the reduced density matrix also requires $O(2^{n+\lfloor n/2 \rfloor}) \approx O(8^{n/2})$ operations.
This yields an exponential complexity of $(2^{n-1}-1)O(8^{n/2}) = O(\sqrt{32}^n) = O(2^{2.5n})$, for all $2^{n-1}-1$ bipartitions to check.
On the other hand, determining the distribution of bits in the same positions across $k$ basis kets of an $n$-qubit state, required to classify a support of a state as PDS$_c$ has computational complexity of only $O(nk)$.
The speedup ($2^{2.5n}/nk$) thus grows exponentially with $n$ for all $k \le 2^n$ as a complexity ratio.
It is particularly useful, however, in the sparse-support regime $k \ll 2^{n/2}$, where the support geometry is informative, as is typical for many structured, post-selected, or shallow-circuit states.

The support structure alone provides immediate separability bounds, allowing one to bypass expensive density-matrix computations.
In particular, $c \ne 0$ is a direct parameter of an \textit{entanglement confinement}, as the state always has the form of 
\begin{equation}\label{eq:PDSc_form}
\ket{A_{n,k}} = \ket{B_{n-c,k_B}} \otimes \ket{C^\text{classical}_{c,k_C}}
\end{equation}
containing $c$ classical (i.e., non-superposed) qubits up to relabeling.

Furthermore, states having support sizes of Lemma~\ref{th:unconditionally_entangled} cannot be realized by any separable bipartition even if PAs are equal.
As a consequence, a uniform superposition over any 15 computational basis states in 4 qubits, for example, must be entangled across every possible bipartition — it is a genuinely multipartite entangled state \cite{Palazuelos2022genuinemultipartite}.
This provides an extremely simple way to construct states with guaranteed full multipartite entanglement by choosing any support of a forbidden size (e.g., with equal PAs).

The support taxonomy can be used as a preprocessing tool:
\begin{enumerate}
\item if the state is of APS or PDS$c$ type, its separability is known without Schmidt analysis,
\item if the state support size is forbidden (Lemma~\ref{th:unconditionally_entangled}), its unconditional entanglement is known without Schmidt analysis, and only
\item if the state is of ADS type, Schmidt analysis is required.  
\end{enumerate}
Hence, the support taxonomy does not replace Schmidt ranks; rather, it restricts their necessity to ADS states.

Further exemplary applications of the introduced taxonomy are provided below.

\subsection{Tracking entanglement spreading in quantum circuits}

Consider the standard textbook circuit that creates an $n$-qubit GHZ state which starts with the $\ket{0_10_2\dots 0_n}$ APS state, applies the Hadamard gate to the first qubit to get the CMB$_{n-1}$ support (still APS state), and then applies a chain of CNOT gates to arrive at the final CMB$_0$ state, as shown below
\begin{equation}\label{eq:Entanglement_Spreading_Example}
\begin{split}
\ket{A_{n,1}} &= \ket{0_10_2\dots 0_{n}} \myH \\
\ket{A_{n,2}} &= \frac{1}{\sqrt{2}}\left(\ket{0_1} + \ket{1_1}\right) \otimes \ket{0_20_3\dots 0_{n}} \myC \\
\ket{A_{n,2}} &= \frac{1}{\sqrt{2}}\left(\ket{0_10_2} + \ket{1_11_2}\right) \otimes \ket{0_31_4\dots 0_{n}} \myC \\
\ket{A_{n,2}} &= \frac{1}{\sqrt{2}}\left(\ket{0_10_20_3} + \ket{1_11_21_3}\right) \otimes \ket{0_40_5\dots 0_{n}} \myC \\
&\dots \\
\ket{A_{n,2}} &= \frac{1}{\sqrt{2}}\left(\ket{0_10_2\dots 0_n} + \ket{1_11_2\dots 1_n}\right). \\
\end{split}    
\end{equation}
Each operation reduces the number of common bits $c$ by exactly 1 and delocalizes the entanglement to one additional qubit.
The taxonomy proposed in this study, therefore, gives an exact, single-integer metric that tracks how entanglement spreads through the CNOT chain - something that would otherwise require expensive Schmidt decompositions across all $2^{n-1}-1$ bipartitions.
This is particularly useful for analyzing fault-tolerance thresholds in GHZ preparation or distribution: if an error occurs when $c$ is still large, the entanglement is still localized to a small block, so the error may be easier to correct locally.
This is illustrated in the equation \eqref{eq:Entanglement_Spreading_Example}, where the fourth qubit erroneously flipped after the first CNOT gate, is corrected before the second one. 

\subsection{Classical simulation speedup for localized-entanglement states}

Because high-$c$ states are in the form \eqref{eq:PDSc_form}, any quantum circuit that produces a state with large $c$ can be classically simulated by discarding the $c$ \textit{classical} qubits and only simulating the remaining $n-c$ qubits.
Even for arbitrary PAs, the fixed qubits contribute only a global factor and never entangle with anything.
Consider, for example, a circuit that produces an $10$-qubit state with $c=7$.
Then the state is a fixed classical $7$-bit string on $7$ qubits tensored with an entangled $3$-qubit state, and the unitary evolution can be simulated on just $3$ qubits instead of $10$, providing an exponential savings.
This feature applies directly to:
\begin{itemize}
\item sparse-state or low-weight circuits,
\item post-selected computations,
\item certain variational ansätze that accidentally stay in high-$c$ supports,
\item debugging shallow circuits where support hasn't fully spread, etc.
\end{itemize}

The $O(nk)$ check for $c$ is vastly cheaper than computing entanglement entropies across all bipartitions, making it a practical pre-filter for tensor-network or exact simulators: "if $c \ge$ some threshold, reduce to $n-c$ qubits".

In summary, the taxonomy turns the hypercube support into a powerful diagnostic tool that is both theoretically clean and practically cheap, especially for tracking entanglement dynamics in circuits, enabling aggressive classical simulation when entanglement stays localized, and constructing guaranteed fully inseparable states.
The taxonomy enables both the detection of entanglement properties in existing states and the design of states with desired entanglement characteristics

\section{Conclusions}\label{sec:Conclusions}

Classifying quantum states using $n$-cube separability structures provides a fast ($O(nk)$ vs. $O(2^{2.5n})$) structural way to assess entanglement without full-state tomography or diagonalization.
Other potential applications in quantum computing include quantum machine learning and data encoding, quantum circuit optimization, quantum communication and network protocols, and quantum-to-classical boundary studies.

\section{Data Accessibility}\label{sec:Data_Accessibility}
The public repository for the code written in the MATLAB computational environment to cross-validated the distributions listed in Tables~\ref{Table:Supports} and \ref{Table:Supports_PAs}, and the XLS file containing them is given under the link 
\url{https://github.com/szluk/n-qubes} (accessed on 19 September 2025).

\section*{Acknowledgments}
I thank my partners Wawrzyniec Bieniawski and Piotr Masierak for their numerous clarifications, formal corrections, and improvements.




\bibliographystyle{vancouver}
\bibliography{apssamp_all}

\end{document}